\pgfplotsset{compat = newest}
\newtheorem{thm}{Theorem}
\newtheorem{cor}{Corollary}
\newtheorem{defn}{Definition}
\newtheorem{rmk}{Remark}
\DeclareMathOperator{\FDR}{FDR}
\DeclareMathOperator{\TPR}{TPR}
\DeclareMathOperator{\Var}{Var}
\DeclareMathOperator{\SNR}{SNR}
\DeclareMathOperator{\EN}{EN}
\DeclareMathOperator{\IEN}{IEN}
\DeclareMathOperator{\sign}{sign}
\providecommand{\customgenericname}{}
\newcommand{\newcustomtheorem}[2]{%
  \newenvironment{#1}[1]
  {%
   \renewcommand\customgenericname{#2}%
   \renewcommand\theinnercustomgeneric{##1}%
   \innercustomgeneric
  }
  {\endinnercustomgeneric}
}
\newcommand{\y}{\boldsymbol{y}}
\newcommand{\x}{\boldsymbol{x}}
\newcommand{\X}{\boldsymbol{X}}
\newcommand{\bbeta}{\boldsymbol{\beta}}
\newcommand{\bepsilon}{\boldsymbol{\epsilon}}
\newcommand{\hatbbeta}{\boldsymbol{\hat{\beta}}}
\newcommand{\A}{\mathcal{A}}
\newcommand{\XK}{\boldsymbol{\protect\accentset{\circ}{X}}}
\newcommand{\XWK}{\boldsymbol{\widetilde{X}}}
\newcommand{\C}{\mathcal{C}}
\newcommand{\G}{\mathcal{G}}
\newcommand{\1}{\boldsymbol{1}}
\let\save@mathaccent\mathaccent
\newcommand*\if@single[3]{%
  \setbox0\hbox{${\mathaccent"0362{#1}}^H$}%
  \setbox2\hbox{${\mathaccent"0362{\kern0pt#1}}^H$}%
  \ifdim\ht0=\ht2 #3\else #2\fi
  }
\newcommand*\rel@kern[1]{\kern#1\dimexpr\macc@kerna}
\newcommand*\widebar[1]{\@ifnextchar^{{\wide@bar{#1}{0}}}{\wide@bar{#1}{1}}}
\newcommand*\wide@bar[2]{\if@single{#1}{\wide@bar@{#1}{#2}{1}}{\wide@bar@{#1}{#2}{2}}}
\newcommand*\wide@bar@[3]{%
  \begingroup
  \def\mathaccent##1##2{%
    \let\mathaccent\save@mathaccent
    \if#32 \let\macc@nucleus\first@char \fi
    \setbox\z@\hbox{$\macc@style{\macc@nucleus}_{}$}%
    \setbox\tw@\hbox{$\macc@style{\macc@nucleus}{}_{}$}%
    \dimen@\wd\tw@
    \advance\dimen@-\wd\z@
    \divide\dimen@ 3
    \@tempdima\wd\tw@
    \advance\@tempdima-\scriptspace
    \divide\@tempdima 10
    \advance\dimen@-\@tempdima
    \ifdim\dimen@>\z@ \dimen@0pt\fi
    \rel@kern{0.6}\kern-\dimen@
    \if#31
      \overline{\rel@kern{-0.6}\kern\dimen@\macc@nucleus\rel@kern{0.4}\kern\dimen@}%
      \advance\dimen@0.4\dimexpr\macc@kerna
      \let\final@kern#2%
      \ifdim\dimen@<\z@ \let\final@kern1\fi
      \if\final@kern1 \kern-\dimen@\fi
    \else
      \overline{\rel@kern{-0.6}\kern\dimen@#1}%
    \fi
  }%
  \macc@depth\@ne
  \let\math@bgroup\@empty \let\math@egroup\macc@set@skewchar
  \mathsurround\z@ \frozen@everymath{\mathgroup\macc@group\relax}%
  \macc@set@skewchar\relax
  \let\mathaccentV\macc@nested@a
  \if#31
    \macc@nested@a\relax111{#1}%
  \else
    \def\gobble@till@marker##1\endmarker{}%
    \futurelet\first@char\gobble@till@marker#1\endmarker
    \ifcat\noexpand\first@char A\else
      \def\first@char{}%
    \fi
    \macc@nested@a\relax111{\first@char}%
  \fi
  \endgroup
}
\DeclareFontFamily{U}{dutchcal}{\skewchar\font=45}
\DeclareFontShape{U}{dutchcal}{m}{n}{<-> s*[1.2] dutchcal-r}{}
\DeclareFontShape{U}{dutchcal}{b}{n}{<-> s*[1.2] dutchcal-b}{}
\DeclareMathAlphabet{\mathdutchcal}{U}{dutchcal}{m}{n}
\SetMathAlphabet{\mathdutchcal}{bold}{U}{dutchcal}{b}{n}
\DeclareMathAlphabet{\mathdutchbcal}{U}{dutchcal}{b}{n}
\newlist{steps}{enumerate}{1}
\setlist[steps, 1]{label = {Step \arabic*:}, ref = {Step \arabic*}}
\newlist{alglist}{enumerate}{1}
\setlist[alglist, 1]{label = {\arabic*.}, ref = {\arabic*}}
\newcolumntype{L}[1]{>{\raggedright\arraybackslash}p{#1}}
\newcolumntype{C}[1]{>{\centering\arraybackslash}p{#1}}
\newcolumntype{R}[1]{>{\raggedleft\arraybackslash}p{#1}}
\definecolor{dark_green}{RGB}{102,166,30}
\definecolor{applegreen}{rgb}{0.55, 0.71, 0.0}
\definecolor{dark_red}{RGB}{217,95,2}
\definecolor{bittersweet}{rgb}{1.0, 0.44, 0.37}
\definecolor{dark_yellow}{RGB}{230,171,2}
\definecolor{bananayellow}{rgb}{1.0, 0.88, 0.21}
\newcommand{\placetextbox}[3]{
  \setbox0=\hbox{#3}
  \AddToShipoutPictureFG*{
    \put(\LenToUnit{#1\paperwidth},\LenToUnit{#2\paperheight}){\vtop{{\null}\makebox[0pt][c]{#3}}}%
  }%
}%
\def\BibTeX{{\rm B\kern-.05em{\sc i\kern-.025em b}\kern-.08em
    T\kern-.1667em\lower.7ex\hbox{E}\kern-.125emX}}
\begin{document}

\title{The Informed Elastic Net for Fast Grouped Variable Selection and FDR Control in Genomics Research}

\author{

\IEEEauthorblockN{Jasin Machkour}
\IEEEauthorblockA{
\textit{Technische Universit\"at Darmstadt} \\
64283 Darmstadt, Germany \\
jasin.machkour@tu-darmstadt.de}
\and
\IEEEauthorblockN{Michael Muma}
\IEEEauthorblockA{
\textit{Technische Universit\"at Darmstadt} \\
64283 Darmstadt, Germany \\
michael.muma@tu-darmstadt.de}
\and
\IEEEauthorblockN{Daniel P. Palomar}
\IEEEauthorblockA{
\scalebox{0.99}[1]{\textit{The Hong Kong University of Science and Technology}}\\
Clear Water Bay, Hong Kong SAR, China\\
palomar@ust.hk}

\thanks{The first and second author are supported by the LOEWE initiative (Hesse, Germany) within the emergenCITY center. The second author is also supported by the ERC Starting Grant ScReeningData. The third author is supported by the Hong Kong GRF 16207820 research grant.}
\thanks{Extensive calculations on the Lichtenberg High-Performance Computer of the Technische Universität Darmstadt were conducted for this research.}
}

\maketitle

\begin{abstract}
Modern genomics research relies on genome-wide association studies (GWAS) to identify the few genetic variants among potentially millions that are associated with diseases of interest. Only reproducible discoveries of groups of associations improve our understanding of complex polygenic diseases and enable the development of new drugs and personalized medicine. Thus, fast multivariate variable selection methods that have a high true positive rate (TPR) while controlling the false discovery rate (FDR) are crucial. Recently, the T-Rex+GVS selector, a version of the T-Rex selector that uses the elastic net (EN) as a base selector to perform grouped variable election, was proposed. Although it significantly increased the TPR in simulated GWAS compared to the original T-Rex, its comparably high computational cost limits scalability. Therefore, we propose the informed elastic net (IEN), a new base selector that significantly reduces computation time while retaining the grouped variable selection property. We quantify its grouping effect and derive its formulation as a Lasso-type optimization problem, which is solved efficiently within the T-Rex framework by the terminated LARS algorithm. Numerical simulations and a GWAS study demonstrate that the proposed T-Rex+GVS (IEN) exhibits the desired grouping effect, reduces computation time, and achieves the same TPR as T-Rex+GVS (EN) but with lower FDR, which makes it a promising method for large-scale GWAS.
\end{abstract}

\begin{IEEEkeywords}
Informed elastic net, T-Rex selector, grouped high-dimensional variable selection, FDR control, GWAS.
\end{IEEEkeywords}
\placetextbox{0.5}{0.08}{\fbox{\parbox{\dimexpr\textwidth-2\fboxsep-2\fboxrule\relax}{\footnotesize Published in IEEE International Workshop on Computational Advances in Multi-Sensor Adaptive Processing (CAMSAP), 10-13 December 2023, Los Sue\~nos, Costa Rica.}}}

\section{Introduction}
\label{sec: Introduction}
Fast grouped variable selection is essential in many modern signal processing applications such as genomics research~\cite{gwasCatalog}, direction-of-arrival (DOA) estimation~\cite{tan2014direction}, and financial index tracking~\cite{benidis2017sparse}, where groups of highly correlated variables are present in the data~\cite{reich2001linkage,shan1985spatial,mantegna1999introduction}. In this paper, we focus on modern genomics research that aims at understanding complex genetic diseases in order to enable personalized medicine and the development of new drugs. In genomics research, a useful and widespread tool for the discovery of the few single nucleotide polymorphisms (SNPs) (among potentially millions of SNPs) that are associated with a disease of interest are genome-wide association studies (GWAS)~\cite{gwasCatalog}. More specifically, in GWAS the aim is to detect as many as possible of the SNPs that are associated with a disease of interest while keeping the number of false discoveries low. Thus, false discovery rate (FDR)-controlling variable selection methods that exhibit a high true positive rate (TPR) are required. The FDR is defined as the expected fraction of false discoveries among all discoveries, i.e., $\FDR = \mathbb{E}[ \text{\# False discoveries } / \text{ \# Discoveries}]$, while the TPR is defined as the expected fraction of true discoveries among all true active variables, i.e., $\TPR = \mathbb{E}[ \text{\# True discoveries } / \text{ \# True actives}]$. However, since the number of SNPs $p$ (i.e., variables) usually exceeds the number of subjects $n$ (i.e., samples), the existing FDR-controlling methods for the low-dimensional setting~\cite{benjamini1995controlling,benjamini2001control,barber2015controlling,dai2016knockoff} are not applicable in this high-dimensional setting (i.e., $p > n$). In recent years, two multivariate FDR-controlling variable selection frameworks for the high-dimensional setting have been proposed: the \textit{T-Rex} selector~\cite{machkour2021terminating} and the \textit{model-X} knockoff method~\cite{candes2018panning}. Among these two frameworks, only the \textit{T-Rex} selector is scalable to millions of variables and, therefore, applicable for real world high-dimensional GWAS with millions of variables in a reasonable computation time (see Figure~1 in~\cite{machkour2021terminating}, and~\cite{machkour2023ScreenTRex}). Recently, the \textit{T-Rex+GVS} selector was proposed~\cite{machkour2022TRexGVS}. It is an extension of the \textit{T-Rex} selector that uses the elastic net~\cite{zou2005regularization} as a base selector to perform grouped variable selection and, thereby, significantly increases the TPR compared to the original \textit{T-Rex} selector in grouped variable selection problems. Unfortunately, however, this performance increase came at the cost of an increased computation time, which (in practice) reduces its scalability to very large GWAS.

\emph{Original Contributions}: We propose the \textit{informed elastic net (IEN)}, a new base selector that performs grouped variable selection while significantly reducing the computation time compared to the original \textit{elastic net (EN)}. We prove that the proposed \textit{IEN} 
\begin{enumerate}
\item can be formulated as a \textit{Lasso}-type optimization problem (Theorem~\ref{Theorem: Lasso-type optimization problem}) and, therefore, can be solved efficiently in a forward-selection manner, as required by the \textit{T-Rex} framework, using the \textit{LARS} algorithm~\cite{efron2004least} and
\item exhibits a grouping effect (Theorem~\ref{Theorem: IEN grouping effect}) that is similar to that of the elastic net.
\end{enumerate}
Additionally, we validate empirically that the proposed \textit{IEN}, as suggested by Theorems~\ref{Theorem: Lasso-type optimization problem} and~\ref{Theorem: IEN grouping effect},
\begin{enumerate}
\item produces solution paths that are similar to the ones of the \textit{EN}~(see Figure~\ref{fig: EN and IEN solution paths}),
\item significantly reduces the computation time compared to the \textit{EN} when incorporated into the \textit{T-Rex} framework as the base selector (see Figure~\ref{fig: relative_cpu_time_T_Rex_EN_IEN}), and
\item has the same TPR as the \textit{T-Rex+GVS} selector using the \textit{EN} while achieving a much lower FDR in a simulated GWAS (see Figure~\ref{fig: simulated GWAS IEN}).
\end{enumerate}

An open source implementation of the proposed method is available in the R package \textit{TRexSelector} on CRAN~\cite{machkour2022TRexSelector}.

Organization: Section~\ref{sec: The T-Rex+GVS Selector} briefly revisits the existing \textit{T-Rex} framework. In Section~\ref{sec: Proposed: The Informed Elastic Net (IEN)}, the proposed \textit{informed elastic net (IEN)} is presented. Section~\ref{sec: Numerical Experiments} compares the proposed \textit{informed elastic net} against the elastic net in terms of performance and computation time, while Section~\ref{sec: Simulated GWAS} presents the results of a simulated GWAS, in which the proposed \textit{IEN} is used as the base selector within the \textit{T-Rex} framework. Section~\ref{sec: Conclusion} concludes the paper and the proofs of Theorems~\ref{Theorem: Lasso-type optimization problem} and~\ref{Theorem: IEN grouping effect} are deferred to the appendix.

\section{The T-Rex+GVS Selector}
\label{sec: The T-Rex+GVS Selector}
This work builds upon the Terminating-Random Experiments (\textit{T-Rex}) selector~\cite{machkour2021terminating}, a recently developed variable selection framework for high-dimensional data that controls any user-defined target FDR (Theorem~1 in~\cite{machkour2021terminating}) while maximizing the number of selected variables. This is achieved by mathematically modeling and fusing the solutions of $K$ early terminated random experiments, where computer-generated dummy variables compete with real variables. Figure~\ref{fig: T-Rex selector framework} shows a schematic overview of the \textit{T-Rex} selector. The main steps are briefly revisited in the following. First, $K$ dummy matrices $\XK_{k} \in \mathbb{R}^{n \times L}$, $k = 1, \ldots, K$, are generated (each containing $L$ dummy predictor vectors). The elements of the dummy predictors can be generated from any univariate probability distribution with finite mean and variance, e.g., the standard normal distribution (Theorem~2 in~\cite{machkour2021terminating}). Second, the dummy matrices are appended to the original predictor matrix $\X = [\x_{1} \, \cdots \, \x_{p}] \in \mathbb{R}^{n \times p}$ to obtain the extended predictor matrices $\XWK_{k} = [\X \,\, \XK_{k}]$, $k = 1, \ldots, K$. Third, early terminated random experiments are conducted by feeding $\XWK_{k}$, $k = 1, \ldots, K$, and $\y$ into a base forward variable selection algorithm, such as the \textit{LARS} algorithm~\cite{efron2004least} or the \textit{Lasso}~\cite{tibshirani1996regression}, and terminating each selection process after $T$ dummies have been included. The obtained candidate variable sets $\C_{k, L}(T)$, $k = 1, \ldots, K$, are fused to obtain the relative occurrences of the original variables $\Phi_{T, L}(j)$, $j = 1, \ldots, p$. The final selected active set contains all variables whose relative occurrences exceed a voting threshold $v \in [0.5, 1)$, i.e., $\widehat{\A}_{L}(v^{*}, T^{*}) \coloneqq \big\lbrace j : \Phi_{T^{*}, L}(j) > v^{*}  \big\rbrace$. The extended calibration algorithm automatically chooses the values of $L$, $T^{*}$, and $v^{*}$, such that the FDR is controlled at the target level $\alpha \in [0, 1]$, while maximizing the number of selected variables.
%
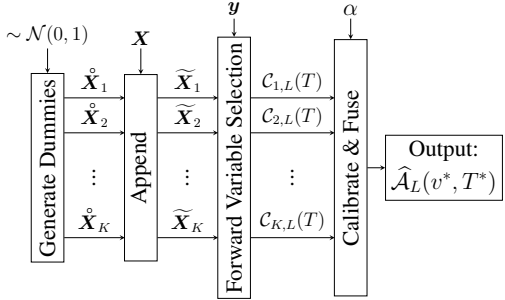
\begin{figure}[h]
\vspace{-20pt}
\begin{center}
\scalebox{0.62}{
\begin{tikzpicture}[>=stealth]

  \coordinate (orig)   at (0,0);
  \coordinate (sample)   at (1,0.5);
  \coordinate (merge)   at (3,0.5);
  \coordinate (varSelect)   at (5,0.5);
  \coordinate (tFDR)   at (15.5,-1.1);
  \coordinate (fuse)   at (7.5,0.5);
  \coordinate (output)   at (9.5,0.5);
  
  \coordinate (between_scale_rank)   at (0.5,0.31);
  \coordinate (X_prime_to_tFDR_point)   at (0.5,6);
  \coordinate (X_prime_to_tFDR_point_point)   at (9,6);
  \coordinate (X_prime_to_merge_point)   at (0.5,-2.5);
  \coordinate (center_to_tFDR_point)   at (5.00,3.7);
  \coordinate (tFDR_to_fuse_point)   at (7.5,3.7);
  \coordinate (tFDR_to_sample_point)   at (4,5);

  \coordinate (Arrow_N_GenDummy)   at (1,3.06);
  \coordinate (Arrow_X_indVar)   at (3,3.06);
  \coordinate (Arrow_targetFDR_tFDR)   at (10,5.9);
  
   \coordinate (inference_Arrow)   at (15,-1.06);
   \coordinate (fuse_Arrow)   at (16.1,2.06);
  
  \coordinate (vdots1)   at (2.0,0.4);
  \coordinate (vdots2)   at (4.0,0.4);
  \coordinate (vdots3)   at (6.25,0.4);
  
  \coordinate (fuse_node)   at (15.00,0.5);
  
  \node[draw, minimum width=.7cm, minimum height=4cm, anchor=center , align=center] (C) at (sample) {\rotatebox{90}{\Large Generate Dummies}};
  \node[draw, minimum width=.7cm, minimum height=4cm, anchor=center, align=center] (D) at (merge) {\rotatebox{90}{\Large Append}};   
  \node[draw, minimum width=.7cm, minimum height=5.5cm, anchor=center, align=center] (E) at (varSelect) {\rotatebox{90}{\Large Forward Variable Selection}};
  \node[draw, minimum width=.7cm, minimum height=5.5cm, anchor=center, align=center] (H) at (fuse) {\rotatebox{90}{\Large Calibrate \& Fuse}};
  \node[draw, minimum width=2.5cm, minimum height=.7cm, anchor=center, align=center] (N) at (output) {\Large Output: \\[0.3em] \Large $\widehat{\mathcal{A}}_{L}(v^{*}, T^{*})$};
  \node (J) at (vdots1) {\Large $\vdots$};
  \node (K) at (vdots2) {\Large $\vdots$};
  \node (L) at (vdots3) {\Large $\vdots$};
  
  \draw[->] (Arrow_N_GenDummy) -- node[above, pos = 0.1]{\large $\sim\mathcal{N}(0, 1)$} ($(C.90)$); 
  \draw[->] (Arrow_X_indVar) -- node[above, pos = 0.1]{\large $\X$} ($(D.90)$); 
     
  \draw[->] ($(C.0) + (0,1.5)$) -- node[above]{\large $\XK_{1}$} ($(D.0) + (-0.7,1.5)$);
  \draw[->] ($(C.0) + (0,0.75)$) -- node[above]{\large $\XK_{2}$} ($(D.0) + (-0.7,0.75)$);
  \draw[->] ($(C.0) + (0,-1.5)$) -- node[above]{\large $\XK_{K}$} ($(D.0) + (-0.7,-1.5)$);
     
  \draw[->] ($(D.0) + (0,1.5)$) -- node[above]{\large $\XWK_{1}$} ($(E.0) + (-0.7,1.5)$);
  \draw[->] ($(D.0) + (0,0.75)$) -- node[above]{\large $\XWK_{2}$} ($(E.0) + (-0.7,0.75)$);
  \draw[->] ($(D.0) + (0,-1.5)$) -- node[above]{\large $\XWK_{K}$} ($(E.0) + (-0.7,-1.5)$);
     
  \draw[->] ($(E.0) + (0,1.5)$) -- node[above]{\large $\C_{1, L}(T)$} ($(H.0) + (-0.7,1.5)$);
  \draw[->] ($(E.0) + (0,0.75)$) -- node[above]{\large $\C_{2, L}(T)$} ($(H.0) + (-0.7,0.75)$);
  \draw[->] ($(E.0) + (0,-1.5)$) -- node[above]{\large $\C_{K, L}(T)$} ($(H.0) +  (-0.7,-1.5)$);
     
  \draw[->] (center_to_tFDR_point) -- node[above, pos = 0.1]{\large $\y$} ($(E.90)$); 
  \draw[->] (tFDR_to_fuse_point) -- node[above, pos = 0.1]{\Large $\alpha$} ($(H.90)$);
 
  \coordinate (between_varSelect_fuse1)   at ($(E.0) + (2.75,1.5)$);
  \coordinate (between_varSelect_fuse2)   at ($(E.0) + (2.75,0.75)$);
  \coordinate (between_varSelect_fuse3)   at ($(E.0) + (2.75,-1.5)$);
 
  \draw[->] (H) -- (N);
\end{tikzpicture}}
\end{center}
\setlength{\abovecaptionskip}{2pt}
\setlength{\belowcaptionskip}{-3pt}
\caption{Sketch of the \textit{T-Rex} selector framework~\cite{machkour2021terminating,machkour2022TRexGVS}.}
\label{fig: T-Rex selector framework}
\end{figure}

%

The \textit{T-Rex+GVS} selector, where \textit{GVS} stands for grouped variable selection, was proposed recently as an extension of the \textit{T-Rex} selector~\cite{machkour2022TRexGVS}. It was proven that, when incorporating the elastic net as the base forward variable selector into the \textit{T-Rex} framework, the grouped variable selection property of the elastic net, whose solution is defined by
\begin{equation}
\hatbbeta_{\EN} \coloneqq \underset{\bbeta}{\arg\min} \, \| \y - \X\bbeta \|_{2}^{2} + \lambda_{1} \| \bbeta \|_{1} + \lambda_{2} \| \bbeta \|_{2}^{2},
\label{eq: elastic net}
\end{equation}
where $\lambda_{1}, \lambda_{2} > 0$, carries over to the \textit{T-Rex+GVS} selector~\cite{machkour2022TRexGVS}. Note that when incorporating the elastic net into the \textit{T-Rex} framework, the sparsity parameter $\lambda_{1}$ is automatically determined according to the user-defined target FDR level and does not require any tedious and computationally expensive tuning. The parameter $\lambda_{2}$ is chosen as suggested in~\cite{machkour2022TRexGVS} via a cross validation of Ridge regression solutions.

Moreover, for grouped variables, such as in GWAS, it was shown that using the \textit{elastic net} as the base selector within the \textit{T-Rex} framework leads to a significant increase in the TPR, while controlling the FDR at the user-defined target level.

\section{Proposed: The Informed Elastic Net (IEN)}
\label{sec: Proposed: The Informed Elastic Net (IEN)}
While the \textit{EN} achieves its grouping effect by penalizing $\| \bbeta \|_{2}^{2}$, this section presents a new \textit{GVS} method that incorporates the information of how the variables are grouped into its penalty term. We show that the proposed \textit{IEN} can be formulated as a \textit{Lasso}-type optimization problem (Theorem~\ref{Theorem: Lasso-type optimization problem}), so that it can be incorporated into the \textit{T-Rex} framework. We also analyze the grouping effect (Theorem~\ref{Theorem: IEN grouping effect}) and show that the \textit{IEN} boils down to the \textit{EN} when every variable is considered to be its own group (Corollary~\ref{Corollary: IEN boils down to EN}). 

In particular, the proposed \textit{IEN} uses single-linkage hierarchical clustering~\cite{murtagh2012algorithms} with the pairwise correlations of the original variables as a distance measure to cluster variables into groups of highly correlated variables, which are present in genomics data due to a phenomenon called linkage disequilibrium~\cite{reich2001linkage}. The obtained dendrogram from the hierarchical clustering can be cut at different levels to obtain $M$ disjoint groups of variables $\G_{1}, \ldots, \G_{M}$, where $M \leq p$. To represent the $m$th, $m = 1, \ldots, M$, group mathematically, we define the binary support vector $\1_{m} = [1_{m, 1} \, \cdots \, 1_{m, p} ]^{\top} \in \lbrace 0, 1 \rbrace^{p}$ that has one entries for variables in the $m$th group and zero entries otherwise. The corresponding group size is $p_{m} \coloneqq \sum_{j = 1}^{p} 1_{m, j}$. With these definitions in place, we define the proposed \textit{IEN}.
\begin{defn}[\textit{Informed elastic net (IEN)}]
Let $\lambda_{1}, \lambda_{2} > 0$ and let $p_{m}$, $m = 1, \ldots, M$, and $\1_{m} \in \lbrace 0, 1 \rbrace^{p}$ be the known group size and the binary support vector of the $m$th group, respectively. Then, the Lagrangian of the \textit{informed elastic net (IEN)} is defined by
\begin{equation}
\mathcal{L}_{\IEN}(\bbeta) \coloneqq \| \y  - \X\bbeta \|_{2}^{2} + \lambda_{1} \| \bbeta \|_{1} + \lambda_{2} \sum\limits_{m = 1}^{M}\dfrac{(\1_{m}^{\top} \bbeta)^{2}}{p_{m}}
\label{eq: Informed Elastic Net Lagrangian}
\end{equation}
and the solution of the \textit{IEN} is defined by
\begin{equation}
\hatbbeta \coloneqq \underset{\bbeta}{\arg\min} \, \mathcal{L}_{\IEN}(\bbeta).
\label{eq: solution of IEN}
\end{equation}
\end{defn}

The following theorem shows that the proposed \textit{IEN} can be cast as a \textit{Lasso}-type optimization problem and, therefore, can be solved efficiently and in a forward selection fashion, as required by the \textit{T-Rex} framework, using the \textit{LARS} algorithm:\footnote{Note that the proposed \textit{IEN} is fundamentally different from the \textit{group Lasso} approach in~\cite{yuan2006model}, since the solution path of the \textit{group Lasso} is not piecewise linear and, therefore, computationally much more expensive.}
\begin{thm}[\textit{Lasso}-type optimization problem]
Let $\X$, $\y$, and $\lambda_{1}, \lambda_{2} > 0$ be given and let $\boldsymbol{0}_{M}$ be the $M$-dimensional vector of zeros. Define
\vspace{-10pt}
\begin{equation}
\X^{\prime} \coloneqq \sqrt{\lambda_{2}} \cdot
\begin{bmatrix}
\X / \sqrt{\lambda_{2}}
\\
\1_{1}^{\top} / \sqrt{p_{1}}
\\
\vdots
\\
\1_{M}^{\top} / \sqrt{p_{M}}
\end{bmatrix},
\quad
\y^{\prime} \coloneqq 
\begin{bmatrix}
\y
\\
\boldsymbol{0}_{M}
\end{bmatrix}.
\label{eq: augmented X and augmented y for IEN}
\end{equation}
Then, the \textit{IEN} can be formulated as a \textit{Lasso}-type optimization problem, i.e.,
\vspace{-5pt}
\begin{equation}
\mathcal{L}_{\IEN}(\bbeta) = \| \y^{\prime} - \X^{\prime} \bbeta \|_{2}^{2} + \lambda_{1} \| \bbeta \|_{1}.
\label{eq: IEN as Lasso-type optimization problem}
\end{equation}
\label{Theorem: Lasso-type optimization problem}
\end{thm}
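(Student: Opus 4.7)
The plan is to verify the claimed identity $\mathcal{L}_{\IEN}(\bbeta) = \| \y^{\prime} - \X^{\prime}\bbeta \|_{2}^{2} + \lambda_{1}\|\bbeta\|_{1}$ by a direct block-wise expansion of the right-hand side, exploiting the stacked structure of $\X^{\prime}$ and $\y^{\prime}$. Since the $\ell_{1}$ penalty appears unchanged on both sides, the entire burden is to show that the squared residual in the augmented least-squares system reproduces the quadratic part $\|\y - \X\bbeta\|_{2}^{2} + \lambda_{2}\sum_{m=1}^{M}(\1_{m}^{\top}\bbeta)^{2}/p_{m}$ of $\mathcal{L}_{\IEN}$.

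First, I would compute $\X^{\prime}\bbeta$ block by block. Pulling the scalar $\sqrt{\lambda_{2}}$ inside the stacked matrix in \eqref{eq: augmented X and augmented y for IEN}, the top $n \times p$ block of $\X^{\prime}$ becomes $\sqrt{\lambda_{2}} \cdot \X/\sqrt{\lambda_{2}} = \X$, so the first $n$ components of $\X^{\prime}\bbeta$ coincide with $\X\bbeta$. For each $m \in \{1,\ldots,M\}$, the $(n+m)$-th row of $\X^{\prime}$ equals $\sqrt{\lambda_{2}/p_{m}}\,\1_{m}^{\top}$, so the corresponding component of $\X^{\prime}\bbeta$ is $\sqrt{\lambda_{2}/p_{m}}\,\1_{m}^{\top}\bbeta$. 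Using the block form $\y^{\prime} = [\y^{\top},\,\boldsymbol{0}_{M}^{\top}]^{\top}$, the residual $\y^{\prime} - \X^{\prime}\bbeta$ is then the concatenation of $\y - \X\bbeta$ with the $M$ scalars $-\sqrt{\lambda_{2}/p_{m}}\,\1_{m}^{\top}\bbeta$.

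Second, since the squared Euclidean norm is additive over orthogonal blocks, this yields
\begin{equation}
\| \y^{\prime} - \X^{\prime}\bbeta \|_{2}^{2} = \| \y - \X\bbeta \|_{2}^{2} + \lambda_{2}\sum_{m=1}^{M} \dfrac{(\1_{m}^{\top}\bbeta)^{2}}{p_{m}}.
\end{equation}
Adding $\lambda_{1}\|\bbeta\|_{1}$ to both sides recovers the definition of $\mathcal{L}_{\IEN}(\bbeta)$ in \eqref{eq: Informed Elastic Net Lagrangian}, which completes the proof.

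No serious obstacle is expected: the statement is a deterministic algebraic identity, and the only delicate point is the bookkeeping of the global prefactor $\sqrt{\lambda_{2}}$ in the definition of $\X^{\prime}$. One must verify that this factor cancels cleanly in the top block (leaving exactly $\X$) while producing the correct weight $\sqrt{\lambda_{2}/p_{m}}$ on each of the $M$ group rows, so that the squared residuals associated with the augmented rows match the group penalty term-by-term. Once this scaling is confirmed, the equivalence is immediate, and the \textit{IEN} is then amenable to the \textit{LARS} algorithm within the \textit{T-Rex} framework exactly as the \textit{EN} is.
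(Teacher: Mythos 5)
Your proof is correct and takes essentially the same route as the paper: a direct algebraic verification that the augmented least-squares residual reproduces the quadratic part of $\mathcal{L}_{\IEN}$, the only cosmetic difference being that you expand the residual block-by-block while the paper expands the quadratic form via the Gram matrix $\X^{\prime\top}\X^{\prime} = \X^{\top}\X + \lambda_{2}\sum_{m=1}^{M}\1_{m}\1_{m}^{\top}/p_{m}$.
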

\begin{rmk}
Note that, in contrast to the \textit{EN}, the \textit{IEN} data augmentation presented in Theorem~\ref{Theorem: Lasso-type optimization problem} requires appending only $M$ additional rows to the original predictor matrix $\X$, while solving the elastic net in a forward selection manner requires appending $p$ rows to $\X$ (for details, see~\cite{zou2005regularization}). Since the number of variable groups $M$, especially in genomics data, is much smaller than the number of variables $p$ (i.e., $M \ll p$), the \textit{IEN} exhibits a significantly reduced computation time when $p$ is very large.
\end{rmk}
The next theorem shows that the proposed \textit{informed elastic net} exhibits a grouping effect, i.e., the difference of the averaged coefficients of any two variable groups is shrinked towards zero according to the maximum correlation between two variables from different groups:
\begin{thm}[\textit{IEN} grouping effect]
Define $\rho_{j, j^{\prime}} \coloneqq \x_{j}^{\top}\x_{j^{\prime}}$, where $\x_{j}$ and $\x_{j^{\prime}}$ are standardized predictors. Suppose that $\hat{\beta}_{j} \hat{\beta}_{j^{\prime}} > 0$ and, without loss of generality, $j \in \G_{1}$ and $j^{\prime} \in \G_{2}$. Then, it holds that
\begin{align}
&\dfrac{1}{\| \y \|_{2}} \Bigg| \dfrac{1}{p_{1}} \sum\limits_{g \in \G_{1}} \hat{\beta}_{g} - \dfrac{1}{p_{2}} \sum\limits_{g \in \G_{2}} \hat{\beta}_{g} \Bigg|
\\
&\,\, =
\dfrac{1}{\| \y \|_{2}} \bigg| \dfrac{\1_{1}^{\top}\hatbbeta}{p_{1}} - \dfrac{\1_{2}^{\top}\hatbbeta}{p_{2}} \bigg| \leq \dfrac{1}{\lambda_{2}} \sqrt{2 \bigg( 1 - \max_{j \in \G_{1}, j^{\prime} \in \G_{2}} \lbrace \rho_{j, j^{\prime}} \rbrace \bigg)}.
\label{eq: IEN grouping effect}
\end{align}
\label{Theorem: IEN grouping effect}
\end{thm}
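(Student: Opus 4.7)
The plan is to adapt the Zou--Hastie grouping-effect argument for the elastic net, replacing the ridge penalty with the group-average penalty of the \textit{IEN}. Since $\hat\beta_{j} \hat\beta_{j'} > 0$, both coordinates are in the active support, so the $\ell_1$ term is differentiable there with slopes $\sign(\hat\beta_{j})$ and $\sign(\hat\beta_{j'})$, respectively. The crucial feature of the IEN penalty is that its partial derivative with respect to any $\beta_{k}$ with $k \in \G_{m}$ equals $2\lambda_{2}\, \1_{m}^{\top}\bbeta / p_{m}$, the same value for every member of group $m$. Writing stationarity of $\mathcal{L}_{\IEN}$ at $\hatbbeta$ for coordinates $j \in \G_{1}$ and $j' \in \G_{2}$ therefore yields $-2 \x_{j}^{\top}(\y - \X\hatbbeta) + \lambda_{1} \sign(\hat\beta_{j}) + 2\lambda_{2}\, \1_{1}^{\top}\hatbbeta / p_{1} = 0$ and the analogous equation with $1 \to 2$ and $j \to j'$.

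Next, I would subtract the two stationarity equations. The $\lambda_{1}$ subgradient terms cancel because $\sign(\hat\beta_{j}) = \sign(\hat\beta_{j'})$ by the sign assumption, delivering the exact identity $\1_{1}^{\top}\hatbbeta / p_{1} - \1_{2}^{\top}\hatbbeta / p_{2} = \lambda_{2}^{-1} (\x_{j} - \x_{j'})^{\top}(\y - \X\hatbbeta)$. This is the IEN analogue of the Zou--Hastie identity for individual coefficients, but with group averages in place of single entries. Observe that its left-hand side depends only on the groups, so the identity holds simultaneously for every matching-sign pair $(j, j')$ with $j \in \G_{1}$, $j' \in \G_{2}$.

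I would then bound the right-hand side by Cauchy--Schwarz: $\lvert (\x_{j} - \x_{j'})^{\top}(\y - \X\hatbbeta) \rvert \leq \| \x_{j} - \x_{j'} \|_{2} \cdot \| \y - \X\hatbbeta \|_{2}$. Standardization of the predictors gives $\| \x_{j} - \x_{j'} \|_{2}^{2} = 2(1 - \rho_{j, j'})$. For the residual, I would exploit the optimality of $\hatbbeta$ together with the nonnegativity of both penalty terms: $\| \y - \X\hatbbeta \|_{2}^{2} \leq \mathcal{L}_{\IEN}(\hatbbeta) \leq \mathcal{L}_{\IEN}(\boldsymbol{0}) = \| \y \|_{2}^{2}$, so $\| \y - \X\hatbbeta \|_{2} \leq \| \y \|_{2}$. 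Dividing by $\| \y \|_{2}$ and choosing the pair that minimizes $\| \x_{j} - \x_{j'} \|_{2}$ --- equivalently, maximizes $\rho_{j, j'}$ --- among matching-sign pairs then produces the stated inequality.

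The KKT manipulation and Cauchy--Schwarz steps are essentially mechanical; the main subtle point is the interpretation of the $\max$ in the bound. Strictly, the KKT identity applies only to pairs with $\hat\beta_{j}\hat\beta_{j'} > 0$, so the tightest bound from this argument uses the maximum correlation \emph{restricted} to such pairs. Since the groups $\G_{1}, \ldots, \G_{M}$ are obtained by single-linkage clustering on pairwise correlations, active coefficients within a tightly linked cluster are expected to share signs, so in typical GWAS settings this restriction is not binding and the stated $\max$ over $\G_{1} \times \G_{2}$ coincides with the matching-sign maximum. The only other potential pitfall is verifying that the IEN penalty contributes the clean group-average gradient claimed above, which follows directly from differentiating $(\1_{m}^{\top}\bbeta)^{2} / p_{m}$ with respect to $\beta_{k}$ for $k \in \G_{m}$.
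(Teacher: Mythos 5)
Your proposal is correct and follows essentially the same route as the paper's own proof: stationarity of $\mathcal{L}_{\IEN}$ at the two active coordinates, cancellation of the $\lambda_{1}\sign(\cdot)$ terms, the Cauchy--Schwarz bound with $\| \x_{j} - \x_{j^{\prime}} \|_{2} = \sqrt{2(1 - \rho_{j,j^{\prime}})}$, and $\| \boldsymbol{\hat{r}} \|_{2} \leq \| \y \|_{2}$ from comparing $\mathcal{L}_{\IEN}(\hatbbeta)$ with $\mathcal{L}_{\IEN}(\boldsymbol{0})$. Your closing remark that the maximum is, strictly speaking, only justified over matching-sign pairs is a fair point of extra care that the paper's own proof glosses over when it takes the max over all of $\G_{1} \times \G_{2}$.
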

\begin{cor}
The grouping effect of the proposed \textit{IEN} is identical to that of the \textit{EN}, when every variable is considered to be a group.
\label{Corollary: IEN boils down to EN}
\end{cor}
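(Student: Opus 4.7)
The plan is to specialize Definition~\ref{eq: Informed Elastic Net Lagrangian} and the bound in Theorem~\ref{Theorem: IEN grouping effect} to the case in which every variable forms its own group, and to observe that both reduce to the corresponding quantities for the \textit{EN}. Concretely, set $M = p$ and $\G_{m} = \lbrace m \rbrace$ for $m = 1, \ldots, p$, so that $p_{m} = 1$ and $\1_{m} = \boldsymbol{e}_{m}$, the $m$th standard basis vector of $\mathbb{R}^{p}$.

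First, I would substitute these choices into the \textit{IEN} penalty term in~\eqref{eq: Informed Elastic Net Lagrangian}. Since $\1_{m}^{\top} \bbeta = \beta_{m}$ and $p_{m} = 1$, we have
\begin{equation}
\lambda_{2} \sum_{m = 1}^{M} \dfrac{(\1_{m}^{\top} \bbeta)^{2}}{p_{m}} = \lambda_{2} \sum_{m = 1}^{p} \beta_{m}^{2} = \lambda_{2} \| \bbeta \|_{2}^{2},
\end{equation}
so that $\mathcal{L}_{\IEN}(\bbeta)$ coincides with the elastic net Lagrangian in~\eqref{eq: elastic net}. Hence $\hatbbeta = \hatbbeta_{\EN}$, and the \textit{IEN} estimator itself reduces to the \textit{EN} estimator in this regime.

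Second, I would apply Theorem~\ref{Theorem: IEN grouping effect} with $\G_{1} = \lbrace j \rbrace$ and $\G_{2} = \lbrace j^{\prime} \rbrace$ for any pair of indices with $\hat{\beta}_{j} \hat{\beta}_{j^{\prime}} > 0$. The left-hand side collapses to $\| \y \|_{2}^{-1} | \hat{\beta}_{j} - \hat{\beta}_{j^{\prime}} |$ because each group-average is a single coefficient, and the maximum on the right-hand side is taken over the singleton pair $(j, j^{\prime})$, yielding
\begin{equation}
\dfrac{1}{\| \y \|_{2}} \bigl| \hat{\beta}_{j} - \hat{\beta}_{j^{\prime}} \bigr| \leq \dfrac{1}{\lambda_{2}} \sqrt{2(1 - \rho_{j, j^{\prime}})},
\end{equation}
which is precisely the elastic net grouping effect bound of Zou and Hastie. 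This establishes the claim.

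I do not anticipate a serious obstacle: the result is essentially a sanity check that the group-aware penalty interpolates the \textit{EN} at the finest possible grouping. The only point to be careful about is that the choice $\1_{m} = \boldsymbol{e}_{m}$ together with $p_{m} = 1$ must be consistent with the assumptions used when deriving Theorem~\ref{Theorem: IEN grouping effect}, in particular that the predictors $\x_{j}$ are standardized; under that standing assumption both the penalty simplification and the specialization of the bound go through without further work.
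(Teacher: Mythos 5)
Your proposal is correct and follows essentially the same route as the paper: the paper's proof likewise sets $M = p$, $p_{1} = \ldots = p_{p} = 1$, observes that the penalty collapses to $\lambda_{2} \| \bbeta \|_{2}^{2}$, and concludes that the \textit{IEN} and \textit{EN} problems coincide. Your additional step of specializing the bound of Theorem~\ref{Theorem: IEN grouping effect} to singleton groups to recover the Zou--Hastie grouping inequality is a harmless (and arguably more complete) elaboration, since the corollary speaks of the grouping effect rather than only the estimators.
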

\begin{proof}
When every variable is considered to be a group, we have $M = p$, $p_{1} = \ldots = p_{p} = 1$, and the third summand in~\eqref{eq: Informed Elastic Net Lagrangian} boils down to $\lambda_{2} \sum_{m = 1}^{p} \beta_{m}^{2} = \lambda_{2} \| \bbeta \|_{2}^{2}$ and, thus,~\eqref{eq: elastic net} and~\eqref{eq: solution of IEN} are equivalent.
\end{proof}
\begin{rmk}
Note that, as desired, the difference of the averaged coefficients of any two variable groups are exactly zero if two variables from different groups are perfectly correlated.
\end{rmk}

\section{Numerical Experiments}
\label{sec: Numerical Experiments}
%
\begin{figure}[!htbp]
  \centering
  \subfloat[\textit{Elastic net (EN)} solution path.]{
  		\scalebox{1}{
  			\includegraphics[width=0.75\linewidth]{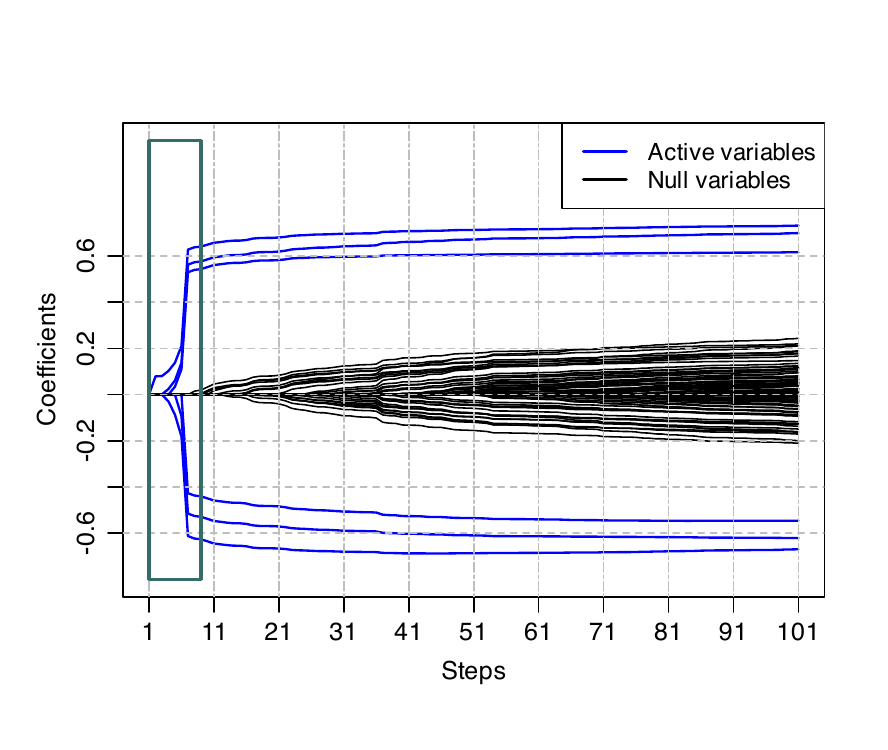}
  		}
   		\label{fig: EN_solution_path_p_100}
   }
	\vspace*{-0.5em}
\\
  \subfloat[Proposed: \textit{Informed elastic net (IEN)} solution path.]{
  		\scalebox{1}{
  			\includegraphics[width=0.75\linewidth]{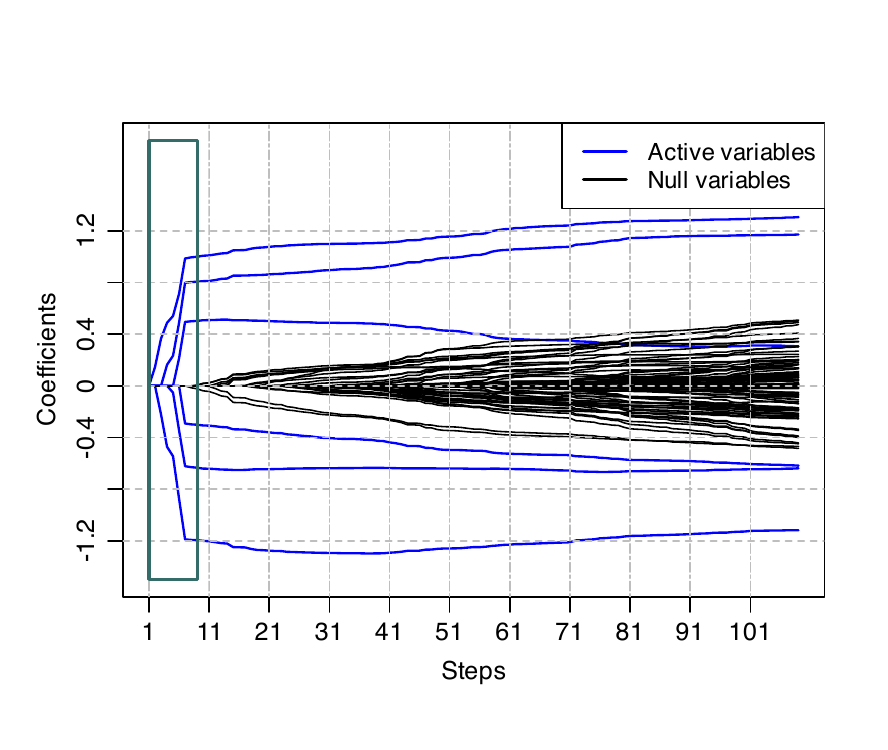}
  		}
   		\label{fig: IEN_solution_path_p_100}
   }
\setlength{\abovecaptionskip}{5pt}
\setlength{\belowcaptionskip}{-15pt}
  \caption{Solution paths of the (a) \textit{EN}~\cite{zou2005regularization} and (b) \textit{IEN}.}
  \label{fig: EN and IEN solution paths}
\end{figure}
%
In this section, we evaluate the grouping effect of the \textit{EN} and the proposed \textit{IEN} and their relative computation times when incorporated into the \textit{T-Rex} framework.
\subsection{Grouping Effect and Solution Path}
\label{subsec: Grouping Effect and Solution Path}
To compare the solution paths of the \textit{elastic net} and the proposed \textit{informed elastic net}, we first generate a setting with $p = 100$ and $6$ active standard normal variables that are split into two independent groups of highly correlated variables $\G_{1}$ and $\G_2$ (i.e., any pair of the three variables in one group has a correlation of $0.75$). The remaining $94$ null variables are sampled independently from the standard normal distribution. The coefficient vector $\bbeta = [ \beta_{1} \,\, \cdots \,\, \beta_{p} ]^{\top}$ of the variables is  chosen as follows: $\beta_{j} = 1$ for $j \in \G_{1}$, $\beta_{j} = -1$ for $j \in \G_{2}$ and $\beta_{j} = 0$ otherwise. The response variable $\y$ is generated from the linear regression model $\y = \X\bbeta + \bepsilon$, where $\bepsilon \sim \mathcal{N}(\boldsymbol{0}, \sigma^{2}\boldsymbol{I})$ is the noise vector. In order to make the grouping effect and the distinction between nulls and active variables visually noticeable, we generated $n = 150$ samples and set the noise variance $\sigma^{2}$ such that the signal-to-noise-ratio $\SNR \coloneq \Var(\X\bbeta) / \Var(\bepsilon) = 3$. To obtain the binary support vectors that are required for the data augmentation presented in Theorem~\ref{Theorem: Lasso-type optimization problem}, we use single-linkage hierarchical clustering with the pairwise correlations between variables as distance measures and cut the resulting dendrogram at the maximum height that satisfies the conservative condition that there exist no two variables from different clusters with a correlation higher than $0.2$.

In Figure~\ref{fig: EN and IEN solution paths}, it can be observed that the \textit{EN} and the \textit{IEN} both exhibit the grouping effect in the sense that the coefficients of the two groups of highly correlated active variables are increased in a correlated fashion. We also observe that the grouping effect of the proposed \textit{IEN} is slightly weaker than that of the \textit{EN}. However, since we use the \textit{IEN} as the base selector within the \textit{T-Rex} framework, which terminates the solution paths of all random experiments early, we are primarily interested in the early steps, where a sufficient grouping is observed for both methods, as illustrated in the boxed regions of Figure~\ref{fig: EN and IEN solution paths}.

\subsection{Relative Computation Time}
\label{subsec: Relative Computation Time}
We compare the relative computation times of one random experiment of 
\begin{enumerate}
\item the original \textit{T-Rex} selector with the \textit{LARS} algorithm as the base selector,
\item the \textit{T-Rex+GVS} selector with the \textit{EN} base selector, and
\item the \textit{T-Rex+GVS} selector with the \textit{IEN} as the base selector
\end{enumerate}
in a setting that is as described in Section~\ref{subsec: Grouping Effect and Solution Path}, except that we fix the number of samples to $n = 50$, set the correlation cutoff of the dendrogram to $0.5$ and increase $p$ from $100$ to $5{,}000$. The computation times are averaged over $50$ Monte Carlo replications. In Figure~\ref{fig: relative_cpu_time_T_Rex_EN_IEN}, we see that with growing number of variables $p$, the relative computation time of the \textit{T-Rex+GVS} selector with the proposed \textit{IEN} as the base selector decreases significantly. In particular, the savings in computation time start to manifest in larger settings with $p \geq 500$ variables, where the \textit{EN} always needs to augment $\X$ with $p$ (i.e., number of variables) rows, while the \textit{IEN} only requires augmenting $\X$ with $M$ (i.e., number of groups) rows, where $M \ll p$.
%
\begin{figure}[!htbp]
  \centering
  		\scalebox{1}{
  			\includegraphics[width=0.6\linewidth]{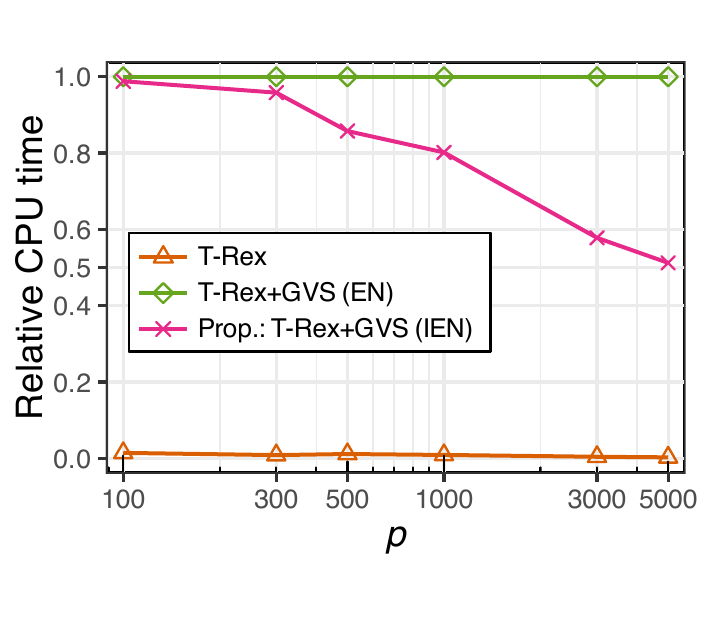}
  		}
\setlength{\abovecaptionskip}{2pt}
\setlength{\belowcaptionskip}{-10pt}
  \caption{Relative computation times of one random experiment with $L = p$ and $T = 1$ of the \textit{T-Rex}, \textit{T-Rex+GVS (EN)}, and the proposed \textit{T-Rex+GVS (IEN)}.}
  \label{fig: relative_cpu_time_T_Rex_EN_IEN}
\end{figure}
%
\begin{figure}[h]
  \centering
  \hspace{-0.25cm}
  \subfloat{
  		\scalebox{1}{
  			\includegraphics[width=0.47\linewidth]{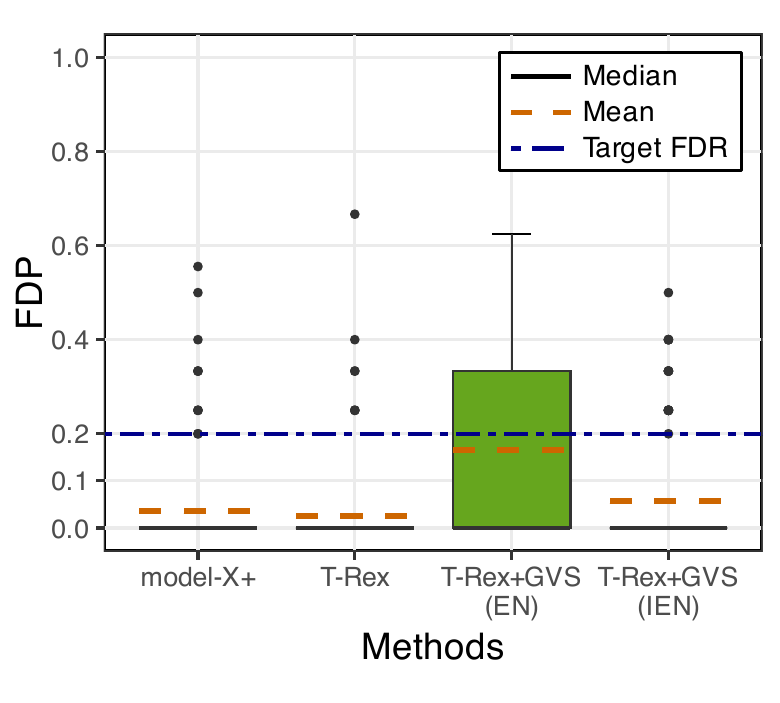}
  		}
   		\label{fig: FDP_simulated_GWAS_IEN}
   }
   \hspace{-0.45cm}
  \subfloat{
  		\scalebox{1}{
  			\includegraphics[width=0.47\linewidth]{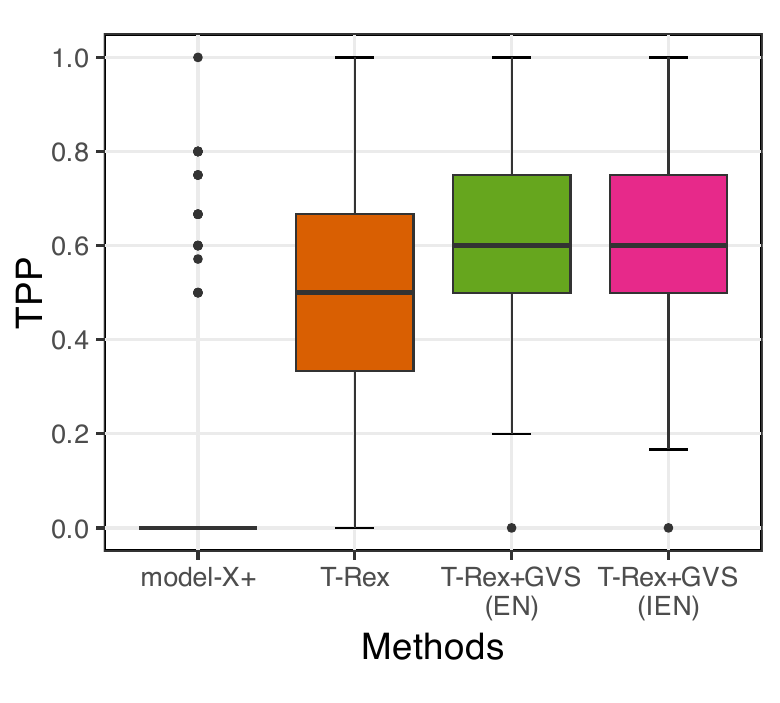}
  		}
   		\label{fig: TPP_simulated_GWAS_IEN}
   }
	\setlength{\abovecaptionskip}{0pt}
	\setlength{\belowcaptionskip}{-10pt}
  \caption{FDP and TPP performances in the simulated GWAS.}
  \label{fig: simulated GWAS IEN}
\end{figure}
%

\section{Simulated GWAS}
\label{sec: Simulated GWAS}
This section presents the results of a GWAS simulation study for which we generate $100$ data sets that each contain $500$ cases and $200$ controls using the software HAPGEN2~\cite{su2011hapgen2}. That is, we input real world haplotypes from the HapMap $3$ project~\cite{international2010integrating} into HAPGEN2 and the software generates $100$ predictor matrices that each contain $1{,}000$ SNPs as columns. The phenotype vector $\y$ contains ones for cases and zeros for controls. For each data set, we generate $10$ true active SNPs, while the remaining ones are nulls. The standard preprocessing of genomics data and the evaluation of the results is carried out as described in~\cite{machkour2022TRexGVS}. As suggested in~\cite{machkour2021terminating}, we run $K = 20$ random experiments for the~\textit{T-Rex} methods.

Figure~\ref{fig: simulated GWAS IEN} presents the box plots of the false discovery proportion (FDP) and the true positive proportion (TPP) and the means of the FDP, which are estimates of the FDR, since the FDR is defined as the expectation of the FDP. First, we observe that all methods control the FDR. The median TPP of the benchmark \textit{model-X} knockoff method is zero. The existing \textit{T-Rex+GVS (EN)} selector shows a significant improvement in TPP compared to the original \textit{T-Rex} selector. As desired, the significant increase in TPP is also achieved by the proposed \textit{T-Rex+GVS (IEN)} selector, while the FDP is much lower compared to that of the \textit{T-Rex+GVS (EN)} selector. Thus, in this GWAS use-case, the proposed \textit{T-Rex+GVS (IEN)} selector dominates the existing \textit{T-Rex+GVS (EN)} selector, while exhibiting a much lower computation time, especially in large-scale high-dimensional settings (see Figure~\ref{fig: relative_cpu_time_T_Rex_EN_IEN}).

\section{Conclusion}
\label{sec: Conclusion}
We have proposed the \textit{informed elastic net (IEN)}, a fast grouped variable selection method for high-dimensional settings. We incorporated it into the \textit{T-Rex} framework and showed that it has a significantly reduced computation time in large-scale high-dimensional settings and a better performance in a simulated GWAS compared to the \textit{T-Rex+GVS (EN)}~\cite{machkour2022TRexGVS}.

\appendix
\begin{proof}[Proof of Theorem~\ref{Theorem: Lasso-type optimization problem}]
First, we rewrite~\eqref{eq: IEN as Lasso-type optimization problem} as follows:
\begin{equation}
\mathcal{L}_{\IEN} = \big( \y^{\prime\top} \y^{\prime} - 2\bbeta^{\top}\X^{\prime\top}\y^{\prime} + \bbeta^{\top}\X^{\prime\top}\X^{\prime}\bbeta \big) + \lambda_{1} \| \bbeta \|_{1}.
\label{eq: proof - Theorem - Lasso-type optimization problem 1}
\end{equation}
Second, note that $\X^{\prime\top}\X^{\prime} = \X^{\top} \X + \lambda_{2} \sum_{m = 1}^{M} \frac{\1_{m} \1_{m}^{\top}}{p_{m}}$. Then, plugging~\eqref{eq: augmented X and augmented y for IEN} into~\eqref{eq: proof - Theorem - Lasso-type optimization problem 1} yields
\begin{align}
\mathcal{L}_{\IEN} &= \bigg[ \y^{\top}\y - 2\bbeta^{\top}\X^{\top}\y
\\[-1em]
&\qquad + \bbeta^{\top} \bigg( \X^{\top} \X + \lambda_{2} \sum\limits_{m = 1}^{M} \dfrac{\1_{m} \1_{m}^{\top}}{p_{m}} \bigg) \bbeta \bigg] + \lambda_{1} \| \bbeta \|_{1}
\\
&= \y^{\top}\y - 2\bbeta^{\top}\X^{\top}\y + \bbeta^{\top}\X^{\top}\X\bbeta
\\
&\qquad + \lambda_{2} \sum\limits_{m = 1}^{M} \dfrac{\bbeta^{\top}\1_{m} \1_{m}^{\top}\bbeta}{p_{m}} + \lambda_{1} \| \bbeta \|_{1}
\\[-0.5em]
&= \| \y - \X\bbeta \|_{2}^{2} + \lambda_{1} \| \bbeta \|_{1} + \lambda_{2} \sum\limits_{m = 1}^{M}\dfrac{(\1_{m}^{\top} \bbeta)^{2}}{p_{m}}. \tag*{\qedhere}
\end{align}
\label{proof: Theorem - Lasso-type optimization problem}
\end{proof}
\vspace{-1.5em}
\begin{proof}[Proof of Theorem~\ref{Theorem: IEN grouping effect}]
Define $\boldsymbol{\hat{r}} \coloneqq \y - \X\hatbbeta$. Taking the first derivative of~\eqref{eq: Informed Elastic Net Lagrangian} and setting it equal to zero, we obtain
\begin{align}
&\dfrac{\partial \mathcal{L}_{\IEN}(\bbeta)}{\partial \bbeta}\bigg|_{\bbeta = \hatbbeta}
\\[-0.75em]
&\,=
-2\X^{\top}\boldsymbol{\hat{r}} + \lambda_{1} \dfrac{\partial \| \bbeta \|_{1}}{\partial \bbeta}\bigg|_{\bbeta = \hatbbeta} + 2\lambda_{2} \sum\limits_{m = 1}^{M}\dfrac{\1_{m} \1_{m}^{\top} \hatbbeta}{p_{m}} \overset{!}{=} \boldsymbol{0}.
\label{eq: proof - Theorem - IEN grouping effect 1}
\end{align}
The $j$th and $j^{\prime}$th equation of the system of equations in~\eqref{eq: proof - Theorem - IEN grouping effect 1}~are:
\begin{align}
& -2\x_{j}^{\top}\boldsymbol{\hat{r}} + \lambda_{1}\sign(\hat{\beta}_{j}) + 2\lambda_{2} \dfrac{\1_{1}^{\top} \hatbbeta}{p_{1}} = 0,
\label{eq: proof - Theorem - IEN grouping effect 2}
\\
& -2\x_{j^{\prime}}^{\top}\boldsymbol{\hat{r}} + \lambda_{1}\sign(\hat{\beta}_{j^{\prime}}) + 2\lambda_{2} \dfrac{\1_{2}^{\top} \hatbbeta}{p_{2}} = 0.
\label{eq: proof - Theorem - IEN grouping effect 3}
\end{align}
Subtracting~\eqref{eq: proof - Theorem - IEN grouping effect 3} from~\eqref{eq: proof - Theorem - IEN grouping effect 2} and noting that $\sign(\hat{\beta}_{j}) = \sign(\hat{\beta}_{j^{\prime}})$:
\begin{equation}
2 ( \x_{j^{\prime}}^{\top} - \x_{j}^{\top} ) \boldsymbol{\hat{r}} + 2\lambda_{2} \bigg( \dfrac{\1_{1}^{\top} \hatbbeta}{p_{1}} - \dfrac{\1_{2}^{\top} \hatbbeta}{p_{2}} \bigg) = 0.
\label{eq: proof - Theorem - IEN grouping effect 4}
\end{equation}
Using~\eqref{eq: proof - Theorem - IEN grouping effect 4}, we obtain
\begin{align}
&\dfrac{1}{\| \y \|_{2}} \bigg| \dfrac{\1_{1}^{\top}\hatbbeta}{p_{1}} - \dfrac{\1_{2}^{\top}\hatbbeta}{p_{2}} \bigg|
= \dfrac{1}{\lambda_{2} \| \y \|_{2}} \big| ( \x_{j}^{\top} - \x_{j^{\prime}}^{\top} ) \boldsymbol{\hat{r}} \big|
\\
&\quad \leq \dfrac{1}{\lambda_{2}} \| \x_{j} - \x_{j^{\prime}} \|_{2} \dfrac{\| \boldsymbol{\hat{r}} \|_{2}}{\| \y \|_{2}}
= \dfrac{1}{\lambda_{2}} \sqrt{2(1 - \rho_{j, j^{\prime}})} \dfrac{\| \boldsymbol{\hat{r}} \|_{2}}{\| \y \|_{2}}
\\
&\quad \leq \dfrac{1}{\lambda_{2}} \sqrt{2(1 - \rho_{j, j^{\prime}})},
\label{eq: proof - Theorem - IEN grouping effect 5}
\end{align}
where the inequality and the equation in the second line follow from the Cauchy-Schwarz inequality and from $\| \x_{j} - \x_{j^{\prime}} \|_{2} = \sqrt{ \| \x_{j} \|_{2}^{2} + \| \x_{j^{\prime}} \|_{2}^{2} - 2\x_{j}^{\top}\x_{j^{\prime}}} = \sqrt{1 + 1 - 2\rho_{j, j^{\prime}}}$, respectively. The last line follows from the fact that $\hatbbeta$ is the minimizer of~\eqref{eq: Informed Elastic Net Lagrangian} and, therefore, $\mathcal{L}(\hatbbeta) \leq \mathcal{L}(\boldsymbol{0}) \Leftrightarrow \| \boldsymbol{\hat{r}} \|_{2}^{2} + \lambda_{1} \| \hatbbeta \|_{1} + \lambda_{2} \sum_{m = 1}^{M} \frac{(\1_{m}^{\top} \hatbbeta)^{2}}{p_{m}} \leq \| \y \|_{2}^{2} \Rightarrow \| \boldsymbol{\hat{r}} \|_{2} \leq \| \y \|_{2}$. Since the inequality in~\eqref{eq: proof - Theorem - IEN grouping effect 5} holds for all $j \in \G_{1}$ and $j^{\prime} \in \G_{2}$, the smallest upper bound is given by the largest $\rho_{j, j^{\prime}}$, i.e.,
\begin{equation}\belowdisplayskip=-12pt
\dfrac{1}{\| \y \|_{2}} \bigg| \dfrac{\1_{1}^{\top}\hatbbeta}{p_{1}} - \dfrac{\1_{2}^{\top}\hatbbeta}{p_{2}} \bigg| \leq \dfrac{1}{\lambda_{2}} \sqrt{2 \bigg( 1 - \max_{j \in \G_{1}, j^{\prime} \in \G_{2}} \lbrace \rho_{j, j^{\prime}} \rbrace \bigg)}.
\label{eq: proof - Theorem - IEN grouping effect 6}
\end{equation}
\label{proof: Theorem - IEN grouping effect}
\end{proof}
\clearpage
\bibliographystyle{IEEEtran}
\bibliography{bibliography}

\typeout{get arXiv to do 4 passes: Label(s) may have changed. Rerun}

\end{document}